\documentclass[conference]{IEEEtran}

\usepackage{graphicx}
\usepackage{amsmath}
\usepackage{amsfonts}
\usepackage{amssymb}
\usepackage{mathrsfs}
\usepackage{eqparbox}
\usepackage{amsthm}
\usepackage{algorithm}
\usepackage{algorithmic}
\usepackage{verbatim}
\usepackage{cite}

\newtheoremstyle{bfnote}
  {}{}
  {\itshape}{}
  {\bfseries}{.}
  { }{\thmname{#1}\thmnumber{ #2}\thmnote{ (#3)}}
\theoremstyle{bfnote}
\newtheorem{theorem}{Theorem}
\newtheorem{lemma}{Lemma}

\newtheorem{definition}{Definition}

\hyphenation{IEEEtran}
\begin{document}
\bibliographystyle{IEEEtran}
\title{\huge{Timely Synchronization with Sporadic Status Changes}}
\author{\IEEEauthorblockN{Chenghao Deng\IEEEauthorrefmark{1}, Jing Yang\IEEEauthorrefmark{2}, Changyong Pan\IEEEauthorrefmark{1}}\\
 \IEEEauthorblockA{\IEEEauthorrefmark{1}Department of Electronic Engineering, Tsinghua University\\ Beijing National Research Center for Information Science and Technology (BNRist), Beijing 100084, P. R. China\\ \IEEEauthorrefmark{2}School of Electrical Engineering and Computer Science, The Pennsylvania State University, University Park, PA 16802\\
 \textit{dengch16@mails.tsinghua.edu.cn, yangjing@psu.edu, pcy@tsinghua.edu.cn}}}

\maketitle


\begin{abstract}
In this paper, we consider a status updating system where the transmitter sends status updates of the signal it monitors to the destination through a rate-limited link.
We consider the scenario where the status of the monitored signal only changes at discrete time points.
The objective is to let the destination be synchronized with the source in a timely manner once a status change happens.
What complicates the problem is that the transmission takes multiple time slots due to the link-rate constraint.
Thus, the transmitter has to decide to switch or to skip a new update when the status of the monitored signal changes and it has not completed the transmission of the previous one yet.
We adopt a metric called ``Age of Synchronization'' (AoS) to measure the ``dissatisfaction'' of the destination when it is desynchronized with the source.
Then, the objective of this paper is to minimize the time-average AoS by designing optimal transmission policies for the transmitter.
We formulate the problem as a Markov decision process (MDP) and prove the multi-threshold structure of the optimal policy.
Based on that, we propose a low computational-complexity algorithm for the MDP value iteration.
We then evaluate the performance of the multi-threshold policy through simulations and compare it with two baseline policies and the AoI-optimal policy. 
\end{abstract}

\begin{IEEEkeywords}
Age of synchronization, preemptive policies, structured value iteration, threshold structure.
\end{IEEEkeywords}


\section{Introduction}
The ubiquitous network connectivity has enabled real-time status monitoring and control in various applications, such as smart home, autonomous driving, smart grids, etc.
In such applications, ensuring timely delivery of status updates to the controller is of critical importance for the stability, safety and efficiency of the system.
On the other hand, the underlying network infrastructure usually cannot support instantaneous delivery of the status update data.
It thus calls for universal metrics to measure the freshness of the status information available at the controller.

Recently, a few metrics have been introduced to measure information freshness.
Among them, the most prevalent one is \textit{Age of Information} (AoI).
Specifically, AoI is defined as the time that has elapsed since the freshest update at the destination was generated.
The AoI has been characterized in various queuing models, such as the single-source-single-user system with different queue disciplines in \cite{6195689,7541765} and the multiple-source system with queue management in \cite{7249268}.
Scheduling policies for AoI minimization are investigated for broadcast channels in \cite{7852321,8514816,8006590,8807257}, for multiple-access systems in \cite{8006544} and \cite{8734015}, respectively.
AoI in energy harvesting systems has been studied in \cite{7283009,8422086,8123937,feng2018age}.
When the transmssion time of updates is non-negligible or not a single time slot, it is proved in \cite{6310931} that the average AoI achieved by last-come-first-served (LCFS) with preemption discipline is lower than that of LCFS without preemption.
In \cite{8695040,8006593,8406945}, the AoI under the last-generated-first-served (LGFS) policy without and with preemption are compared. The optimal policies of preemption for average AoI minimization for a link-rate constrained status updating system is studied in \cite{8445919} and \cite{8764466}. In \cite{arafa2019timely}, the preemption policy for AoI minimization in cloud computing is investigated.

AoI as a universal metric is effective in capturing the information freshness in systems where the underlying status changes continuously in time and the corresponding time-domain structure is hard to model.
However, in many applications, the monitored signal may only change sporadically in time, e.g., for platooning in autonomous driving, vehicles are moving at a constant speed until some driving condition changes.
For such scenarios, as long as the status of the system does not change after the controller receives the update about the last status change, the information at the controller is still ``fresh''.
In other words, the information freshness should not be measured by the time that has elapsed since the generation of the latest received update.
Rather, it is related to the time that has elapsed since a status change happens at the source and information at the controller becomes outdated.
In observation of this, a metric called ``\textit{Age of Synchronization}'' (AoS) is proposed in \cite{8437927}.
It refers to the duration since the destination became \textit{desynchronized} with the source.
With this definition, the AoS in a multiple-user cache system under a given refresh rate constraint is analyzed, and a near-optimal rate allocation policy is proposed.
In \cite{8849418}, a lower bound of the time-average AoS in the broadcast network is calculated and an index based policy for AoS minimization is proposed to approximate the optimal solution to an MDP based formulation.
In the same spirit, another metric called ``Age of Incorrect Information'' (AoII) is proposed in \cite{maatouk2019age}.
AoII takes both the time that the monitor is unaware of the correct status of the system and the difference between the current estimate at the monitor and the actual state of system into the definition.
With particular penalty functions, AoII reduces to AoS.
In \cite{stamatakis2019control} and \cite{8406891}, the definition of AoI has been extended to account for the state changes of the monitored stochastic process.

In this paper, we investigate the AoS in a discrete-time single-source single-destination system under a link rate constraint. Different from \cite{8849418} and \cite{maatouk2019age}, we assume it takes multiple time slots to finish the transmission of each update and only one update can be transmitted in each slot.
The transmitter should make decisions to skip or to switch when there is a new update generated and the current transmission is unfinished yet.
We focus on the Markovian policies, and formulate the problem as a Markov decision process.
We prove the optimal policy has a multi-threshold structure, based on which we propose a structured value iteration policy to reduce the computational complexity.


\section{System Model and Problem Formulation}
\label{sys_pro}
We consider a single-link status monitoring system where a transmitter keeps sending time-stamped status updates to a monitor.
The time axis is discretized into slots.
At the beginning of each time slot, the status of the observed process may change according to an i.i.d. Bernoulli process $\{a_t\}$ with parameter $p$. Once a status change happens, a status update is generated at the source.

To simplify the analysis, as the first step, we assume that the updates are of the same size, and it takes $b$ time slots to transmit one update to the destination.
We assume the transmitter can transmit only one update at any time slot, and there is no buffer at the transmitter.
Let $w_t\in\{0,1\}$ be a binary decision variable.
If a new update arrives at the transmitter during a busy slot, the transmitter should decide either to drop the new update and keep transmitting the previous one, which is termed as \textit{skip} with $w_t=0$, or to drop the unfinished update and switch to the new one, which is termed as \textit{switch} with $w_t=1$.
We label the updates in the order of their generation times and use $T_m$ to denote the generation time of the $m$-th update.
Denote $D(t)$ as the index of the latest update received by the monitor at the beginning of the $t$-th time slot.
Then the age of synchronization is defined as
\begin{equation}
    \text{AoS}(t):=(t-T_{D(t)+1})^+,
\end{equation}
where $T_{D(t)+1}$ refers to the time when the source generates a new update after $D(t)$ and the destination becomes desynchronized, and $(x)^+=\max(x,0)$.

In order to capture the state of the system, we introduce the AoS at the transmitter as well.
Specifically, let $K(t)$ be the index of the latest update the transmitter transmits. Then the AoS at the transmitter is denoted as $(t-T_{K(t)+1})^+$.
If the transmitter {\it switches} to a new update once it is generated, the AoS at the transmitter is zero; otherwise, if it {\it skips} a new update, the transmitter becomes desynchronized with the source, and its AoS starts growing.

Let $S_i$ be the time slot that the destination has been updated successfully for the $i$-th time, where $i=0,1,2,\cdots$.
Without loss of generality, we assume $S_0=0$. $S_i$s partition the time axis into epochs, where the length of the $i$th epoch is denoted as $L_i:=S_i-S_{i-1}$.

Depending on the evolution of the monitored process, two scenarios may happen when the destination receives an updates.
For the first scenario, the update arriving at the monitor is the latest update generated by the monitored process, which is regarded as a ``fresh'' update.
Therefore, the monitor is synchronized to the observed process successfully and $\text{AoS}(S_i)=0$.
It will not increase until the monitored process changes.
On the other hand, if the monitored process changes during the transmission of the latest received update, the monitor will not be synchronized with the monitored process, thus $\text{AoS}(S_i)>0$.
For both scenarios, $\text{AoS}(t)$ will be increased by 1 for each time slot of desynchronization.
Finally, when a new update is delivered to the monitor, this epoch ends and a new one begins.
The evolution of AoS is shown in the Fig. \ref{fig:2}.

\begin{figure}[t]
    \centering
    \includegraphics[width=\columnwidth]{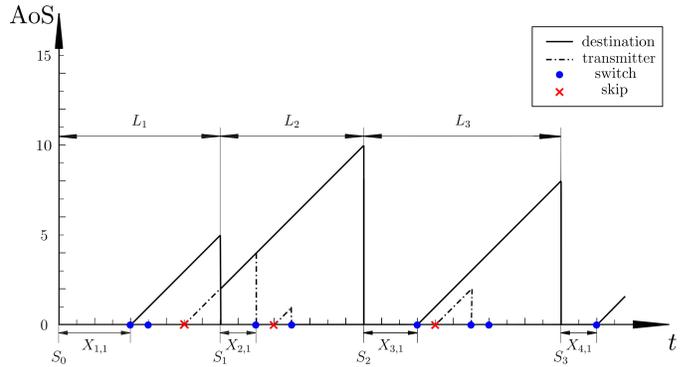}
    \vspace{-0.3in}
    \caption{Evolution of the AoS at the transmitter and at the destination with $b=4$. Once an update is received at the destination, the AoS at the destination is reset to the AoS at the transmitter.}
    \vspace{-0.2in}
    \label{fig:2}
\end{figure}

Let $X_{i,1}$ be the time between $S_i$ and the first update generation time after it.
Then, the area under the AoS curve during the $i$-th epoch, $R_i$, is given by
\begin{equation}
     R_i=\begin{cases}
     \frac{1}{2}(L_i-X_{i,1})^2,&\text{AoS}(S_{i-1})=0,\\
     \text{AoS}(S_{i-1})L_i+\frac{1}{2}L_i^2,&\text{AoS}(S_{i-1})>0.
     \end{cases}
\end{equation}

Let $M(T)$ be the number of successfully received updates over $(0,T]$.
Therefore, the cumulative AoS experienced by the monitor over $(0,T]$ can be expressed as $R(T)=\sum_{m=1}^{M(T)}R_i+r$, where
\begin{equation}
    r\triangleq\begin{cases}
    \frac{1}{2}[(T-T')^+]^2,&\text{ if }\text{AoS}(S_{M(T)})=0,\\
    \text{AoS}(S_{M(T)})\Delta_T+\frac{1}{2}\Delta_T^2,&\text{ if }\text{AoS}(S_{M(T)})>0,
    \end{cases}
\end{equation}
and $T'=S_{M(T)}+X_{M(T)+1,1},\ \Delta_T=T-S_{M(T)}$.

We consider a set of \textit{online} policies $\Pi$, in which the information available for determining $w_t$ includes the decision history $\{w_i\}_{i=1}^{t-1}$, the update generation profile $\{a_i\}_{i=1}^{t}$, as well as the generation rate $p$. Our objective is to solve the following problem:
\begin{equation}
\label{opt_pro}
    \min\limits_{\pi\in\Pi} \limsup\limits_{T\to\infty} \mathbb{E}\left[\frac{R(T)}{T}\right],
\end{equation}
where the expectation is taken with respect to the random update generation process.


\section{MDP Formulation}
\label{MDP_for}

Before formulating the MDP, we define a subset of \textit{online} policies named \textit{persistent} policies as follows,
\begin{definition}[Persistent Policy]
Under an online policy $\pi\in\Pi$, if the transmitter always keeps transmitting an update until the transmission is finished, or until the generation of a new update, i.e., it will not drop an unfinished update if no new update is generated, then this policy is a persistent policy.
\end{definition}
\if{0}
We have the following theorem.

\begin{theorem}
\label{persis}
The optimal online policy that solves (\ref{opt_pro}) is a persistent policy.
\end{theorem}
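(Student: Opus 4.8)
The plan is to prove the claim by a sample-path interchange argument: I would fix an arbitrary online policy $\pi\in\Pi$ and build from it a \emph{persistent} online policy $\pi'\in\Pi$ such that, on every realization of the arrival process $\{a_t\}$, one has $\mathrm{AoS}_{\pi'}(t)\le\mathrm{AoS}_{\pi}(t)$ for all $t$. This would give $R_{\pi'}(T)\le R_{\pi}(T)$ pointwise, hence $\mathbb{E}[R_{\pi'}(T)]\le\mathbb{E}[R_{\pi}(T)]$ for every $T$, so the objective value in (\ref{opt_pro}) attained by $\pi'$ is no larger than that of $\pi$; since $\pi$ is arbitrary, the optimum over $\Pi$ is attained within the class of persistent policies. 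The first reduction is to observe that, because $\mathrm{AoS}(t)=(t-T_{D(t)+1})^+$ and $T_m$ is increasing in $m$, the age at time $t$ is non-increasing in the index $D(t)$ of the latest update delivered to the monitor; it therefore suffices to construct $\pi'$ so that $D_{\pi'}(t)\ge D_{\pi}(t)$ for every $t$.

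For the construction I would scan the sample path in increasing $t$ and repair the non-persistent decisions of $\pi$ one at a time. Let $t^*$ be the first slot at which $\pi$ drops an unfinished update $m$ --- with $k\ge 1$ of its $b$ slots already transmitted --- even though $a_{t^*}=0$, and let $\pi'$ copy $\pi$ on all earlier slots. From $t^*$ on, $\pi'$ simply keeps transmitting $m$, and no foreknowledge is needed to know which of two things occurs: (i) if $m$ is completed before any new update is generated, $\pi'$ delivers $m$ at slot $t^*+b-k$; this raises $D_{\pi'}$ to $m$, strictly above the unchanged $D_{\pi}$, while the span $[t^*,t^*+b-k-1]$ that $\pi'$ now spends on $m$ lies entirely inside the idle stretch $\pi$ wasted, so no later delivery is postponed and $D_{\pi'}\ge D_{\pi}$ is maintained; or (ii) a new update is generated at some $\tau\le t^*+b-k-1$ before $m$ finishes, in which case $\pi'$ switches to that update at $\tau$ (a legitimate move under persistence, since a new update is generated) and thereafter tracks the transmission choices of $\pi$, the two policies being in states that agree in $D$ and in which $\pi'$ is at least as advanced. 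Iterating this repair over all non-persistent decisions of $\pi$ yields a $\pi'$ that is causal, contains (and creates) no drop-without-arrival decision, and keeps $D_{\pi'}(t)\ge D_{\pi}(t)$, hence $\mathrm{AoS}_{\pi'}(t)\le\mathrm{AoS}_{\pi}(t)$, for all $t$.

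The step I expect to be the main obstacle is the bookkeeping in case (i): once $\pi'$ has completed $m$ early, its state diverges from that of $\pi$ and $\pi'$ runs strictly ahead, so when a later non-persistent decision of $\pi$ is repaired $\pi'$ may already be transmitting a newer update than $\pi$, or be idle and synchronized. Making this rigorous requires a carefully chosen invariant --- for instance $D_{\pi'}(t)\ge D_{\pi}(t)$ together with a domination of $\pi'$'s residual in-progress transmission over that of $\pi$ --- shown to survive each repair, with ``keep transmitting $m$'' reinterpreted as ``make the persistent choice that keeps $\pi'$ at least as advanced.'' A cleaner alternative, given the MDP of Section~\ref{MDP_for}, is to argue one slot at a time on the value function: in any state with an unfinished update and no arrival in the current slot, the action ``continue transmitting'' leads to a next state whose optimal cost-to-go is no larger than that of the ``drop-and-idle'' next state, because from the former the transmitter may still abandon the update at any later slot and in addition can complete and deliver it sooner; a policy-improvement step then produces an optimal policy that never drops without a new arrival. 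I would also flag the minor point that the conclusion is properly read as ``there exists an optimal online policy that is persistent,'' since a non-persistent policy can be optimal only when its non-persistent actions fall in states visited with probability zero.
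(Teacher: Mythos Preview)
Your proposal is correct and follows exactly the route the paper indicates: the paper gives only a one-sentence justification, namely that ``if the optimal policy is not persistent, we can always construct a persistent policy to reduce the corresponding AoS,'' which is precisely your sample-path interchange (and the commented-out remark that the theorem ``can be proved through contradiction'' confirms this). Your treatment is far more detailed than anything the paper supplies; the bookkeeping difficulty you flag in case~(i) is real but surmountable via the invariant you propose, and your closing caveat that the statement should be read as ``there exists an optimal online policy that is persistent'' is well taken.
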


Theorem \ref{persis} can be proved through contradiction.
\fi
We can show that the optimal online policy that solves (\ref{opt_pro}) is a persistent policy. This is because if the optimal policy is not persistent, we can always construct a persistent policy to reduce the corresponding AoS. Therefore, we restrict to persistent policies in the following.

To make the optimization in (\ref{opt_pro}) tractable, we focus on Markovian policies, under which the decision only depends on the current state. The MDP is formulated as follows:

\textbf{States}:  Denote the AoS at the destination and at the transmitter, and the remaining transmission time of the unfinished update at the beginning of time slot $t$ as $d_t$, $\delta_t$ and $l_t$, respectively. Let $a_t\in\{0,1\}$ denote whether a new update is generated at the source at the beginning of time slot $t$. Then, the state of the MDP at the beginning of time slot $t$ is denoted as $\mathbf{s}_t:=(d_t,\delta_t,l_t,a_t)$. 
    
We note that $d_t=\delta_t$ if and only if $l_t=0$, which happens when the transmission of an update is finished and the destination is synchronized with the source. Otherwise, the update at the destination is more outdated than that at the transmitter, suggesting $\delta_t<d_t$. Moreover, the past transmission time of the current update $b-l_t$ should be smaller than $d_t$, since there may be multiple updates generated after the desynchronization occurs at the destination.
Besides, $b-l_t$ must be larger than $\delta_t$, since the desynchronization at the transmitter always occurs after the beginning of the current transmission. Therefore, for any valid {\it busy} state with $l_t>0$, we must have $\delta_t<b-l_t\leq d_t$, while for any valid {\it idle} state with $l_t=0$, we must have $\delta_t=d_t$.
    
 \textbf{Actions}: $w_t\in\{0,1\}$. At the beginning of time slot $t$, if $a_t=1$, when $w_t=0$, the transmitter skips the new update, and when $w_t=1$, the transmitter begins to send the new update.
    If $a_t=0$, we must have $w_t=0$.
    
 \textbf{Transition Probabilities}:
  First, we note that $a_{t+1}$ evolves according to an independent Bernoulli random variable with parameter $p$.
  Then, based on $a_t$ and $w_t$, we divided the states into two categories:
    \begin{itemize}
        \item $a_t=0$ or $w_t=0$: When there is no new update generated at the source or a new update is generated but dropped, the transmitter will continue its previous operation, i.e., either transmitting an unfinished update, or being idle. Thus, we have
        \begin{equation}
        \begin{aligned}
            &d_{t+1}=\begin{cases}
            d_t+1,&l_t\ne1\text{ and }(d_t>0\text{ or }a_t=1),\\
            \delta_t+1,&l_t=1\text{ and }\delta_t>0,\\
            0,&\text{otherwise}.
            \end{cases}\\
            &\delta_{t+1}=\begin{cases}
            \delta_t+1,&\delta_t>0\text{ or }a_t=1,\\
            0,&\text{otherwise}.
            \end{cases}\\
            &l_{t+1}=(l_t-1)^+.
        \end{aligned}
        \end{equation}
        
        \item $a_t=1, w_t=1$: If there is a new update generated at the source and the transmitter decides to switch, the transmitter will be refreshed and the destination becomes desynchronized. Thus,
        \begin{equation}
        \begin{aligned}
            &d_{t+1}=d_t+1,\quad \delta_{t+1}=0,\quad l_{t+1}=b-1.
        \end{aligned}
        \end{equation}
    \end{itemize}

 \textbf{Cost}: Let $C(\mathbf{s}_t; w_t)$ be the instantaneous AoS at the destination under state $\mathbf{s}_t$, i.e., $C(\mathbf{s}_t; w_t)=d_t$.


\section{Structural Properties of the Optimal Policy}
\label{thresh_alg}

In this section, we prove the multi-threshold structure of the optimal policy, based on which we propose a low computational complexity algorithm for value iteration to solve the MDP.
We first introduce an $\alpha$-discounted MDP as follows:
\begin{equation}
\label{alpha_dis}
    V^\alpha(\mathbf{s})=\min\limits_{w\in\mathcal{W}(\mathbf{v})}C(\mathbf{s};w)+\alpha\mathbb{E}[V^\alpha(\mathbf{s'})|\mathbf{s},w],
\end{equation}
where $0<\alpha<1$. It has been shown that the optimal policy to minimize the long-term average cost can be obtained by solving (\ref{alpha_dis}) when $\alpha\to1$.
We start with the following value iteration formulation with $V_0^\alpha(\mathbf{s})=0, \forall \mathbf{s}$:
\begin{equation}
    V_{n+1}^\alpha(\mathbf{s})=\min\limits_{w\in\mathcal{W}(\mathbf{s})}C(\mathbf{s};w)+\alpha\mathbb{E}[V_n^\alpha(\mathbf{s'})|\mathbf{s},w],
\end{equation}
where the set of allowable actions $\mathcal{W}(\mathbf{s})$ will be specified later.

In addition, we denote the state-action value functions as follows 
\begin{align}
    Q^\alpha(\mathbf{s};w)&:=C(\mathbf{s};w)+\alpha\mathbb{E}[V^\alpha(\mathbf{s}')|\mathbf{s},w],\\
    Q_n^\alpha(\mathbf{s};w)&:=C(\mathbf{s};w)+\alpha\mathbb{E}[V_n^\alpha(\mathbf{s}')|\mathbf{s},w].
\end{align}
When $a=0$, i.e., there is no new update generated in the current slot, the transmitter can only choose to continue its previous operation.
Thus,
\begin{equation}
\label{idle_q}
    V_{n+1}^\alpha(d,\delta,l,0)=Q_n^\alpha(d,\delta,l,0;0).
\end{equation}
Otherwise, when $a=1$, there is a new update generated at the current time slot, and the transmitter can choose to switch or to skip.
Thus,
\begin{equation}
\label{busy_q}
    V_{n+1}^\alpha(d,\delta,l,1)=\min\limits_{w\in\{0,1\}}Q_n(d,\delta,l,1;w).
\end{equation}

\subsection{Monotonicity of the Value Function}
\begin{lemma}
\label{expectation}
$\mathbb{E}_a[V_n^\alpha(d_1,\delta_1,l_1,a)]\leq\mathbb{E}_a[V_n^\alpha(d_2,\delta_2,l_2,a)]$ if $V_n^\alpha(d_1,\delta_1,l_1,a)\leq V_n^\alpha(d_2,\delta_2,l_2,a)$,\ for $a\in\{0,1\}$.
\end{lemma}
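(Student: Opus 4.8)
The plan is to unwind the expectation over the exogenous indicator $a$ and reduce the statement to a nonnegative linear combination of the two hypothesized inequalities. Recall from the system model that the next-slot update indicator is an i.i.d.\ Bernoulli$(p)$ random variable, independent of the current state and of all past decisions. Hence for any function $f(d,\delta,l,\cdot)$ one has the explicit identity
\begin{equation}
\mathbb{E}_a[f(d,\delta,l,a)] = (1-p)\,f(d,\delta,l,0) + p\,f(d,\delta,l,1),
\end{equation}
and, crucially, the mixing weights $(1-p,p)$ do not depend on $(d,\delta,l)$.

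First I would instantiate the hypothesis at the two possible values of $a$: by assumption $V_n^\alpha(d_1,\delta_1,l_1,0)\le V_n^\alpha(d_2,\delta_2,l_2,0)$ and $V_n^\alpha(d_1,\delta_1,l_1,1)\le V_n^\alpha(d_2,\delta_2,l_2,1)$. Multiplying the first inequality by $1-p\ge 0$, the second by $p\ge 0$, and adding them gives
\begin{equation}
(1-p)V_n^\alpha(d_1,\delta_1,l_1,0) + p\,V_n^\alpha(d_1,\delta_1,l_1,1) \le (1-p)V_n^\alpha(d_2,\delta_2,l_2,0) + p\,V_n^\alpha(d_2,\delta_2,l_2,1),
\end{equation}
which, by the identity above, is exactly $\mathbb{E}_a[V_n^\alpha(d_1,\delta_1,l_1,a)]\le\mathbb{E}_a[V_n^\alpha(d_2,\delta_2,l_2,a)]$, as claimed.

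I do not expect a genuine obstacle here: the only thing that needs to be checked is that the distribution of the next-slot update indicator is the same at whichever state we evaluate the value function, and this is immediate from the i.i.d.\ assumption on $\{a_t\}$ together with its independence from the decision history. The point of isolating this lemma is organizational rather than technical: it lets the subsequent monotonicity arguments (for instance, showing $V_n^\alpha$ is nondecreasing in $d$) pass from a comparison that holds pointwise in $a$ to a comparison of the one-step look-ahead terms $\mathbb{E}[V_n^\alpha(\mathbf{s}')\mid \mathbf{s},w]$ without repeating this averaging step each time.
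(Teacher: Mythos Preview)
Your argument is correct and matches the paper's approach: the paper simply states that the lemma follows directly from the definition of expectation, and your proof spells this out by writing $\mathbb{E}_a[\cdot]=(1-p)\cdot|_{a=0}+p\cdot|_{a=1}$ and taking the nonnegative linear combination of the two hypothesized inequalities. There is nothing to add.
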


Lemma \ref{expectation} can be shown directly based on the definition of expectation, and it is a fundamental building block for the proofs of the remaining lemmas.

\begin{lemma}
\label{mono_d}
For any valid busy state $\mathbf{s}\in\mathcal{S}$, $V_n^\alpha(d,\delta,l,a)$ is monotonically increasing in $d$ at every iteration $n$.
\end{lemma}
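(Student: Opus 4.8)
The plan is to prove the statement by induction on the iteration index $n$, using the value-iteration recursions (\ref{idle_q})--(\ref{busy_q}) together with Lemma~\ref{expectation}. The base case $n=0$ is immediate since $V_0^\alpha(\mathbf{s})=0$ for all $\mathbf{s}$. For the inductive step, I would assume that $V_n^\alpha(d,\delta,l,a)$ is nondecreasing in $d$ over all valid busy states, and fix two valid busy states $\mathbf{s}_1=(d_1,\delta,l,a)$ and $\mathbf{s}_2=(d_2,\delta,l,a)$ sharing the same $\delta,l,a$ with $d_1\le d_2$. By validity, $\delta<b-l\le d_1$ with $1\le l\le b-1$, so in particular $d_1\ge b-l\ge 1$. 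Since $C(\mathbf{s};w)=d$, the instantaneous cost already satisfies $C(\mathbf{s}_1;w)\le C(\mathbf{s}_2;w)$ for every admissible $w$, so the task reduces to comparing the one-step look-ahead terms $\mathbb{E}[V_n^\alpha(\mathbf{s}')\,|\,\mathbf{s}_i,w]$.

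The core of the argument is a coupling of the two transitions under a common action $w$ and a common realization of the next arrival $a'$, split into the cases dictated by the transition rules. When $a=1,w=1$ (switch), the successors are $(d_1+1,0,b-1,a')$ and $(d_2+1,0,b-1,a')$, which are valid busy states with the $d$-coordinates ordered. When $a=0$ or $w=0$ (continue) and $l\ge 2$, since $d_i\ge 1$ the successors are $(d_1+1,\delta',l-1,a')$ and $(d_2+1,\delta',l-1,a')$, where $\delta'=\delta+1$ when $\delta>0$ or $a=1$ and $\delta'=0$ otherwise; one checks from $\delta<b-l\le d_i$ that both successors remain valid busy states, and again the $d$-coordinates are ordered. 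When $a=0$ or $w=0$ (continue) and $l=1$, the successor coordinates $d_{t+1}$ and $\delta_{t+1}$ do not depend on $d_i$ and $l_{t+1}=0$, so the two successors coincide and the corresponding look-ahead terms are trivially equal; in particular no monotonicity claim for idle states is required. In the first two cases the inductive hypothesis gives $V_n^\alpha(d_1+1,\delta',\cdot,a')\le V_n^\alpha(d_2+1,\delta',\cdot,a')$ for each $a'\in\{0,1\}$, and Lemma~\ref{expectation}, applied with the arrival coordinate as the averaged variable, upgrades this to $\mathbb{E}[V_n^\alpha(\mathbf{s}')\,|\,\mathbf{s}_1,w]\le\mathbb{E}[V_n^\alpha(\mathbf{s}')\,|\,\mathbf{s}_2,w]$.

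Combining the cost and look-ahead comparisons yields $Q_n^\alpha(\mathbf{s}_1;w)\le Q_n^\alpha(\mathbf{s}_2;w)$ for every $w\in\mathcal{W}(\mathbf{s}_1)=\mathcal{W}(\mathbf{s}_2)$, where the equality of the admissible-action sets holds because they depend only on the shared coordinate $a$. Taking the minimum over $w$ on both sides and invoking (\ref{idle_q})--(\ref{busy_q}) gives $V_{n+1}^\alpha(\mathbf{s}_1)\le V_{n+1}^\alpha(\mathbf{s}_2)$, completing the induction. The main obstacle I anticipate is the bookkeeping in the coupling step: one must verify, case by case against the piecewise transition rules, both that the successors of $\mathbf{s}_1$ and $\mathbf{s}_2$ stay inside the valid busy-state region $\{\delta<b-l\le d\}$ (so the inductive hypothesis applies) and that whenever a successor leaves that region, i.e. becomes idle, the two coupled trajectories have already merged, so that the restricted monotonicity statement over busy states is self-contained and the same coupling will carry over in the proofs of the subsequent lemmas.
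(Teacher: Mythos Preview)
The paper explicitly omits the proof of this lemma (``Due to space limitation, we omit the proofs of the lemmas in this paper''), so there is no authors' argument to compare against. Your induction-on-$n$ proof via action-by-action coupling is correct and is precisely the standard route one expects here: the instantaneous cost $C(\mathbf{s};w)=d$ is ordered, the switch action sends both states to $(d_i+1,0,b-1,a')$ (valid busy, ordered in $d$), the continue action with $l\ge 2$ sends both to $(d_i+1,\delta',l-1,a')$ (valid busy, ordered in $d$), and for $l=1$ the continue action produces $d_{t+1}$ depending only on $\delta_t$, so the two coupled successors coincide and no idle-state monotonicity is needed. Your validity bookkeeping ($\delta'<b-(l-1)\le d_i+1$ from $\delta<b-l\le d_i$) is the right check. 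Taking minima over the common admissible set $\mathcal{W}(\mathbf{s}_1)=\mathcal{W}(\mathbf{s}_2)$ and invoking Lemma~\ref{expectation} closes the induction exactly as you describe.
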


\begin{lemma}
\label{mono_dd}
For any valid idle state $(d,d,0,a)$, $V_n^\alpha(d,d,0,a)$ is monotonically increasing in $d$ at every iteration $n$.
\end{lemma}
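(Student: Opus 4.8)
The plan is to prove this by induction on the iteration index $n$, piggybacking on the monotonicity already established for busy states in Lemma \ref{mono_d}. The base case $n=0$ is trivial since $V_0^\alpha\equiv 0$. For the inductive step, I assume that $V_n^\alpha(d,d,0,a)$ is nondecreasing in $d$ for both $a\in\{0,1\}$, and also invoke Lemma \ref{mono_d} to assume $V_n^\alpha$ is nondecreasing in $d$ on busy states. I then fix two idle states $(d_1,d_1,0,a)$ and $(d_2,d_2,0,a)$ with $d_1\le d_2$ and compare $V_{n+1}^\alpha$ at these two states using the value-iteration recursions \eqref{idle_q} and \eqref{busy_q}.

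The core of the argument is a coupling of the one-step transitions out of the two idle states. An idle state has $l=0$, so the transmitter is forced to continue being idle whenever $a=0$, and may start transmitting the freshly generated update when $a=1$. Couple the two chains so that they see the same realization of $a_{t+1}$ (and the same action $w$ when $a=1$). Under $a=1,w=1$ the successor of $(d_i,d_i,0,1)$ is $(d_i+1,0,b-1,\cdot)$, a busy state, and by Lemma \ref{mono_d} the value at the $d_1$-successor is at most that at the $d_2$-successor. Under $a=0$, or $a=1,w=0$: the transition rules in \eqref{idle_q}'s transition block give $\delta_{t+1}=\delta_t+1$ or $0$ and $d_{t+1}$ equal to $d_t+1$ (or $0$) accordingly — in every branch the $d_1$-successor and $d_2$-successor are either both idle states with the first coordinate of the $d_1$-chain no larger than that of the $d_2$-chain, or both busy states ordered in $d$, so the induction hypothesis (for idle states) or Lemma \ref{mono_d} (for busy states) applies. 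In all cases $V_n^\alpha$ at the $d_1$-successor is $\le$ $V_n^\alpha$ at the $d_2$-successor pointwise over the coupled randomness; taking expectations and using Lemma \ref{expectation} gives $\mathbb{E}[V_n^\alpha(\mathbf{s}_1')\mid\mathbf{s}_1,w]\le\mathbb{E}[V_n^\alpha(\mathbf{s}_2')\mid\mathbf{s}_2,w]$ for each fixed feasible $w$. Since the instantaneous cost satisfies $C(\mathbf{s}_1;w)=d_1\le d_2=C(\mathbf{s}_2;w)$, each state-action value obeys $Q_n^\alpha(\mathbf{s}_1;w)\le Q_n^\alpha(\mathbf{s}_2;w)$, and taking the minimum over the (identical) feasible action sets preserves the inequality, yielding $V_{n+1}^\alpha(d_1,d_1,0,a)\le V_{n+1}^\alpha(d_2,d_2,0,a)$.

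I expect the main obstacle to be bookkeeping the transition branches correctly: an idle state $(d,d,0,a)$ with $d>0$ and a new arrival that is skipped actually transitions to another idle state (both $d$ and $\delta$ increment in lockstep), whereas if $d=0$ and no arrival occurs it stays at $(0,0,0,\cdot)$, and if $d=0$ but an arrival is skipped the transmitter desynchronizes — one must check the resulting successor is still an admissible state and that the ordering in the relevant coordinate is preserved. The only subtlety is that the successor of an idle state need not be idle (it can be busy, when the arrival is switched to), which is exactly why Lemma \ref{mono_d} is needed as an external input; conversely Lemma \ref{mono_d}'s proof presumably uses this lemma for the idle successors reached when $l_t=1$, so the two are proved by a simultaneous induction. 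I would state that the two inductions run in parallel over $n$ to avoid circularity. Everything else is a routine application of Lemma \ref{expectation} and the fact that minimization over a common action set is monotone.
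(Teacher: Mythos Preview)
The paper explicitly omits the proof of this lemma (``Due to space limitation, we omit the proofs of the lemmas in this paper''), so there is no authorial argument to compare against. Judged on its own, your plan is sound and is the natural route: induction on the iteration index $n$, a case split over the one-step transitions out of the idle state $(d,d,0,a)$, and an appeal to Lemma~\ref{mono_d} for the busy successor reached when $a=1,w=1$. Your bookkeeping of the successor states is consistent with the paper's transition rules: under $a=0$ or $a=1,w=0$ the idle state $(d,d,0,\cdot)$ moves to the idle state $(d{+}1,d{+}1,0,\cdot)$ (or stays at $(0,0,0,\cdot)$ when $d=0$ and $a=0$), and under $a=1,w=1$ it moves to the busy state $(d{+}1,0,b{-}1,\cdot)$; in every branch the $d$-ordering is preserved and the appropriate inductive hypothesis applies. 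Adding the ordered immediate costs $d_1\le d_2$ and minimizing over the common action set then gives the claim, exactly as you describe.

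Your remark that Lemmas~\ref{mono_d} and~\ref{mono_dd} must be established by a joint induction on $n$ is correct and worth making explicit: a busy state with $l=1$ transitions (under $w=0$) to an idle state, so the inductive step for Lemma~\ref{mono_d} at level $n{+}1$ needs Lemma~\ref{mono_dd} at level $n$, and vice versa. Stating the simultaneous induction up front removes any appearance of circularity. With that caveat addressed, the proposal is complete.
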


Due to space limitation, we omit the proofs of the lemmas in this paper.
With Lemmas \ref{mono_d} and \ref{mono_dd}, we can prove the monotonicity of the value function in $\delta$ as follows.
\begin{lemma}
\label{mono_s}
For any valid busy state $\mathbf{s}\in\mathcal{S}$, $V_n^\alpha(d,\delta,l,a)$ is non-decreasing in $\delta$ at every iteration $n$.
\end{lemma}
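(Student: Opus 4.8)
The plan is to prove the claim by induction on the iteration index $n$. Since Lemmas~\ref{mono_d} and~\ref{mono_dd} hold at every iteration, I may freely use, within the step from $n$ to $n+1$, that $V_n^\alpha$ is non-decreasing in $d$ over both busy and idle states, together with Lemma~\ref{expectation} to pass from a pointwise inequality to an inequality between the expectations over the random next arrival $a'$. The base case $n=0$ is immediate, as $V_0^\alpha\equiv 0$. For the inductive step, assume $V_n^\alpha$ is non-decreasing in $\delta$ over all valid busy states, fix $d,l,a$, and take $\delta_1\le\delta_2$ with $(d,\delta_1,l,a)$ and $(d,\delta_2,l,a)$ both valid busy states (so $l\ge 1$ and $\delta_i<b-l\le d$). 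The instantaneous cost $C(\mathbf{s};w)=d$ does not depend on $\delta$, so it suffices to show, for every allowable action $w$, that $\mathbb{E}[V_n^\alpha(\mathbf{s}')\mid(d,\delta_1,l,a),w]\le\mathbb{E}[V_n^\alpha(\mathbf{s}')\mid(d,\delta_2,l,a),w]$. Granting that, when $a=0$ the single-action recursion~(\ref{idle_q}) gives $V_{n+1}^\alpha(d,\delta_1,l,0)\le V_{n+1}^\alpha(d,\delta_2,l,0)$, and when $a=1$ the elementary fact that $\min_w f(w)\le\min_w g(w)$ whenever $f\le g$ pointwise, applied to~(\ref{busy_q}), gives $V_{n+1}^\alpha(d,\delta_1,l,1)\le V_{n+1}^\alpha(d,\delta_2,l,1)$.

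It then remains to check the continuation-value inequality action by action. If $w=1$ (possible only when $a=1$), the successor is $(d+1,0,b-1,a')$ independently of $\delta_i$, so the two expectations coincide. If $w=0$ and $l\ge 2$, the transmission is still ongoing and the successor is $(d+1,\phi(\delta_i),l-1,a')$, where $\phi(\delta)=\delta+1$ if $\delta>0$ or $a=1$, and $\phi(0)=0$ if $a=0$; the map $\phi$ is non-decreasing, and both successors are again valid busy states because $\phi(\delta_i)\le\delta_i+1\le b-l<b-l_{t+1}\le d+1$. Hence the induction hypothesis gives $V_n^\alpha(d+1,\phi(\delta_1),l-1,a')\le V_n^\alpha(d+1,\phi(\delta_2),l-1,a')$ for each $a'\in\{0,1\}$, and Lemma~\ref{expectation} lifts this to the expectations. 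If $w=0$ and $l=1$, the transmission completes, so $l_{t+1}=0$ and the successor is an idle state $(\hat d,\hat d,0,a')$ whose common age coordinate $\hat d$ is, by the transition rules, a non-decreasing function of $\delta$; Lemma~\ref{mono_dd} together with Lemma~\ref{expectation} then delivers the inequality. Combining the three cases closes the induction.

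The routine part is the Bellman recursion; the delicate part will be the bookkeeping at the $\delta=0$ boundary and the verification that every successor lands in the state class (busy or idle) for which the relevant monotonicity has already been established. In particular, skipping a freshly generated update while the transmitter is synchronized ($\delta=0$, $a=1$) desynchronizes the transmitter, so the maps $\delta\mapsto\phi(\delta)$ and $\delta\mapsto\hat d$ have a kink at $0$ that must still be checked to be non-decreasing across both branches. This is precisely why Lemma~\ref{mono_dd} (idle-state monotonicity in $d$) — and, through its own proof, Lemma~\ref{mono_d} (busy-state monotonicity in $d$) — must be in place first: when $l=1$ the $\delta$-comparison collapses onto a $d$-comparison between idle states, and without those lemmas that step would be unavailable.
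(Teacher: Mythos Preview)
The paper omits its own proof of this lemma for space, stating only that it follows ``with Lemmas~\ref{mono_d} and~\ref{mono_dd}.'' Your induction-on-$n$ argument is correct and is exactly the kind of proof the paper gestures at: compare the $Q$-functions action by action, use the induction hypothesis on busy successors when $l\ge 2$, and invoke Lemma~\ref{mono_dd} on the idle successor when $l=1$. Two small remarks. First, your argument in fact never uses Lemma~\ref{mono_d} directly---it enters only indirectly, if at all, through the proof of Lemma~\ref{mono_dd}---so you could drop it from the opening sentence. Second, the corner case $\delta=0$, $a=1$, $w=0$, $l=1$ that you flag is indeed delicate: as literally written, the paper's transition rule gives $d_{t+1}=0$ but $\delta_{t+1}=1$, which is not a valid idle state; the intended reading is $d_{t+1}=\delta_{t+1}=1$. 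Either way your map $\delta\mapsto\hat d$ remains non-decreasing, so your argument goes through, but it is worth noting that you are (correctly) reading the transition as ``destination inherits the transmitter's post-skip AoS,'' which is what the model requires even if the displayed formula slightly understates it.
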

With Lemmas \ref{mono_d}-\ref{mono_s}, we obtain the following properties for the value function for busy and idle states as follows.

\begin{lemma}
\label{busy_small}
For any valid busy state $\mathbf{s}\in\mathcal{S}$ with $l>0$, $V_n^\alpha(d,\delta,l,a)\leq V_n^\alpha(d,d,0,a)$.
\end{lemma}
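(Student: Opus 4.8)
The plan is to prove the inequality by induction on the iteration index $n$, coupling the fresh arrival indicator across the two successor chains so that only the deterministic part of the transition has to be compared. The base case $n=0$ holds since $V_0^\alpha\equiv 0$. For the inductive step, fix a valid busy state $(d,\delta,l,a)$ with $l>0$; validity forces $1\le b-l\le d$, hence $d\ge 1$ and $\delta\le d-1$, so the degenerate idle state $(0,0,0,a)$ never enters the comparison. Since $C(\mathbf{s};w)=d$ for both $(d,\delta,l,a)$ and $(d,d,0,a)$ regardless of the action, the whole argument reduces to bounding the one-step look-ahead terms $\alpha\mathbb{E}[V_n^\alpha(\mathbf{s}')\mid\cdot]$.

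First take $a=0$. Both states are forced to use $w=0$, so by (\ref{idle_q}) it suffices to compare a single pair of successors: $(d,d,0,0)$ goes to $(d+1,d+1,0,a')$ (using $d\ge1$), while $(d,\delta,l,0)$ goes to $(d+1,\delta+1,l-1,a')$ when $l\ge2$ and to an idle state with destination AoS at most $d$ when $l=1$. If $l\ge2$, then $(d+1,\delta+1,l-1,a')$ is again a valid busy state (one checks $\delta+1<b-(l-1)\le d+1$ and $l-1\ge1$), so the induction hypothesis gives $V_n^\alpha(d+1,\delta+1,l-1,a')\le V_n^\alpha(d+1,d+1,0,a')$. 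If $l=1$, the successor is idle with destination AoS strictly below $d+1$, so Lemma~\ref{mono_dd} gives the same bound. Taking $\mathbb{E}_{a'}$ and invoking Lemma~\ref{expectation} yields $V_{n+1}^\alpha(d,\delta,l,0)\le V_{n+1}^\alpha(d,d,0,0)$.

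Now take $a=1$. Both states choose between $w=0$ and $w=1$, so I would show $Q_n^\alpha(d,\delta,l,1;w)\le Q_n^\alpha(d,d,0,1;w)$ for each $w$ and then take the minimum over $w$ on both sides. For $w=1$ the two successors are both exactly $(d+1,0,b-1,a')$, so the $Q$-values are equal. For $w=0$ the successor of $(d,d,0,1)$ is $(d+1,d+1,0,a')$ and that of $(d,\delta,l,1)$ is $(d+1,\delta+1,l-1,a')$ (if $l\ge2$) or an idle state with destination AoS at most $d$ (if $l=1$) — precisely the comparison already handled in the $a=0$ case — so the induction hypothesis, Lemma~\ref{mono_dd} and Lemma~\ref{expectation} again close it. This completes the induction, and sending $n\to\infty$ transfers the bound to the fixed point $V^\alpha$.

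The step I expect to require the most care is the bookkeeping for $l=1$, i.e.\ the slot in which the in-flight update is actually delivered: one must read off from the transition rules that the resulting idle state's destination AoS is genuinely $\le d$ (which follows from $\delta<b-l$ together with $b-l\le d$, giving $\delta+1\le d$), and recognize that it is exactly this case — not the induction hypothesis — that needs the idle-state monotonicity Lemma~\ref{mono_dd}. A minor but necessary side check is verifying the validity of $(d+1,\delta+1,l-1,a')$ as a busy state before the induction hypothesis can be applied to it.
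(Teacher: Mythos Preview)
Your proposal is correct and is exactly the kind of induction-on-$n$, couple-the-arrival argument that the paper's framework (Lemma~\ref{expectation} for the expectation step, Lemma~\ref{mono_dd} for the $l=1$ delivery slot) is set up to support; the paper itself omits the proof for space, but your route is the natural one implied by the ordering of Lemmas~\ref{expectation}--\ref{mono_s}. One tiny bookkeeping point: when $a=0$, $l\ge 2$ and $\delta=0$, the transmitter-AoS successor is $\delta_{t+1}=0$ (not $\delta+1$) by the transition rule, but $(d+1,0,l-1,a')$ is still a valid busy state and the induction hypothesis applies to it just as well, so nothing in the argument changes.
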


\begin{lemma}
\label{busy_large}
For any valid state $\mathbf{s}\in\mathcal{S}$ with $l=1$, $V_n^\alpha(\delta,\delta,0,a)\leq V_n^\alpha(d,\delta,1,a)$.
\end{lemma}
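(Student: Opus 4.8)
The plan is to proceed by induction on the iteration index $n$, mirroring the inductive framework already established for Lemmas~\ref{mono_d}--\ref{busy_small}. The base case $n=0$ is immediate since $V_0^\alpha\equiv 0$. For the inductive step, assume all of Lemmas~\ref{mono_d}--\ref{busy_small} hold at iteration $n$ (these are the hypotheses we are entitled to use), and assume additionally that $V_n^\alpha(\delta,\delta,0,a)\leq V_n^\alpha(d,\delta,1,a)$ holds for the relevant states. We want to show $V_{n+1}^\alpha(\delta,\delta,0,a)\leq V_{n+1}^\alpha(d,\delta,1,a)$ for any valid state with $l=1$, i.e.\ with $\delta< b-1\le d$.

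First I would expand both sides using the value-iteration recursions \eqref{idle_q}--\eqref{busy_q} and the transition rules. On the left, the state $(\delta,\delta,0,a)$ is idle: if $a=0$ the system stays idle with $d_{t+1}=\delta_{t+1}=0$ (or $\delta+1$ if $\delta>0$), while if $a=1$ the only sensible move is to start a fresh transmission, yielding $(\delta,0,b-1,a')$ after incrementing — actually one must be careful: from an idle synchronized-looking state the transmitter will switch, landing in $(\delta+1,0,b-1,\cdot)$; I would track the $(x)^+$ edge cases $\delta=0$ versus $\delta>0$ separately. On the right, the state $(d,\delta,1,a)$ has $l=1$, so after one slot the transmission completes: under the skip/continue branch ($w=0$), $l_{t+1}=0$ and $d_{t+1}=\delta+1$ (since $l_t=1$ and $\delta_t>0$, using the $\delta_t=0$ case separately), i.e.\ the destination resynchronizes to the transmitter's level, landing in the idle state $(\delta+1,\delta+1,0,\cdot)$ or $(0,0,0,\cdot)$; under the switch branch ($w=1$, available only if $a=1$) it jumps to $(d+1,0,b-1,\cdot)$. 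The key structural observation is that completing the $l=1$ transmission on the right gives, in its continue-branch successor, essentially the same idle state that the left side's continue-branch reaches after one more step of aging — so the two evolutions become comparable after accounting for the extra $d$ versus $\delta$ gap, which is where Lemmas~\ref{mono_d} and \ref{mono_dd} (monotonicity in $d$) supply the needed inequality between successor values, and Lemma~\ref{expectation} lifts the pointwise comparison through the expectation over $a'$.

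The main obstacle I anticipate is handling the $\min$ over $w\in\{0,1\}$ on the right-hand side when $a=1$: we must show the inequality holds no matter which action is optimal for the busy state $(d,\delta,1,1)$. For the $w=0$ branch the successor $(\delta+1,\delta+1,0,\cdot)$ dominates the left side's successor by monotonicity in $d$ combined with the inductive hypothesis for $l=1$; for the $w=1$ branch the successor is $(d+1,0,b-1,\cdot)$, and one compares it against the left side by first applying the inductive hypothesis (moving from the $l=b-1$ busy state up to the corresponding idle state, or using Lemma~\ref{busy_small} in the appropriate direction) and then monotonicity in $d$ to absorb the $d+1$ versus $\delta+1$ difference, together with Lemma~\ref{mono_s} to manage the $\delta$-coordinate. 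Since the left-hand side is a single (non-minimized, when $a=0$) or itself-minimized (when $a=1$) expression, bounding it above by every right-hand-side branch suffices; the instantaneous costs contribute $\delta$ on the left and $d\ge\delta$ on the right, so the cost terms already point the right way. Careful bookkeeping of the $(x)^+$ boundary cases ($\delta=0$, and the transition from idle to the first desynchronized slot) is the only genuinely fiddly part, and I would dispatch those as a short case analysis at the end.
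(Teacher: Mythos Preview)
The paper explicitly omits the proof of this lemma (``Due to space limitation, we omit the proofs of the lemmas in this paper''), so there is nothing to compare against directly. Your inductive framework is the right one, and the argument can be made to work, but it is considerably more tangled than it needs to be and the specific lemma invocations you sketch do not line up with what is actually required.

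The clean observation you are missing is this: from the idle state $(\delta,\delta,0,a)$ and from the busy state $(d,\delta,1,a)$, the successor states under the \emph{same} action $w$ are either identical or differ only in the first coordinate. Concretely, under $w=0$ both states transition to $(\delta+1,\delta+1,0,a')$ (or $(0,0,0,a')$ when $\delta=0$, $a=0$), because completing the $l=1$ transmission resets the destination's AoS to the transmitter's level; under $w=1$ they transition to $(\delta+1,0,b-1,a')$ and $(d+1,0,b-1,a')$ respectively. Hence
\[
Q_n^\alpha(\delta,\delta,0,a;w)\;\le\;Q_n^\alpha(d,\delta,1,a;w)
\]
for each admissible $w$: for $w=0$ the expectation terms coincide and the instantaneous costs satisfy $\delta\le d$; for $w=1$ one additionally uses Lemma~\ref{mono_d} (monotonicity in $d$) at level $n$ to compare the successor values. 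Taking the minimum over $w$ on both sides gives $V_{n+1}^\alpha(\delta,\delta,0,a)\le V_{n+1}^\alpha(d,\delta,1,a)$. No appeal to Lemma~\ref{busy_small}, Lemma~\ref{mono_s}, or a separate inductive hypothesis for Lemma~\ref{busy_large} itself is needed.

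Your write-up instead fixes $w=1$ on the left (which already requires Lemma~\ref{busy_small} to justify) and then, for the right-hand $w=0$ branch, claims to use ``monotonicity in $d$ combined with the inductive hypothesis for $l=1$'' to compare $(\delta+1,0,b-1,\cdot)$ with $(\delta+1,\delta+1,0,\cdot)$; that comparison is exactly Lemma~\ref{busy_small}, not what you cite. Likewise, for the $w=1$ branch you invoke the inductive hypothesis, Lemma~\ref{busy_small}, and Lemma~\ref{mono_s}, when a single application of Lemma~\ref{mono_d} suffices. These are not fatal errors---the pieces can be reassembled into a valid proof---but the plan as written would not execute cleanly, and the simpler action-by-action matching avoids all of it.
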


These two lemmas can be intuitively explained as follows.
Lemma \ref{busy_small} indicates that with the same AoS at the destination, the state in which the transmitter is transmitting an update is always better than the state in which the transmitter is idle.
This is because the transmission is beneficial to a successful update at the destination.
On the other hand, Lemma \ref{busy_large} suggests with the same AoS at the transmitter, the state requires one more slot to update the destination is ``worse'' than the state that the destination has just been synchronized, since the AoS at the destination in the former state is not lower than that in the latter state, and this relationship holds for any upcoming state after transition with the same action taken at the transmitter.
With Lemma \ref{busy_large}, we can prove the monotonicity of value function in $l$ as following,

\begin{lemma}
\label{mono_l}
For any valid state $\mathbf{s}\in\mathcal{S}$, $V_n^\alpha(d,\delta,l,a)$ is non-decreasing in $l$ for $l>0$ at every iteration $n$. 
\end{lemma}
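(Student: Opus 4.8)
The plan is to prove the statement by induction on the value-iteration index $n$, drawing on the monotonicity and ordering properties already established in Lemmas \ref{expectation}--\ref{busy_large}, all of which are available at every iteration. The base case $n=0$ is trivial since $V_0^\alpha\equiv 0$. For the inductive step it suffices to establish the one-step inequality $V_{n+1}^\alpha(d,\delta,l,a)\le V_{n+1}^\alpha(d,\delta,l+1,a)$ for every pair of valid busy states with $l\ge 1$ sharing the same $d$, $\delta$ and $a$; the full monotonicity in $l$ then follows by chaining these comparisons over the finite range of admissible $l$ for fixed $(d,\delta)$. Since $C(\mathbf{s};w)=d$ does not depend on $l$ or on $w$, the whole question reduces to comparing the continuation terms $\mathbb{E}[V_n^\alpha(\mathbf{s}')\,|\,\mathbf{s},w]$.

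The first step is to fix a minimizing action $w^*$ for the larger state $\mathbf{s}_2=(d,\delta,l+1,a)$. Because the admissible action set depends only on $a$, $w^*$ is also admissible at $\mathbf{s}_1=(d,\delta,l,a)$, so it is enough to show $Q_n^\alpha(\mathbf{s}_1;w^*)\le Q_n^\alpha(\mathbf{s}_2;w^*)$, which with $V_{n+1}^\alpha(\mathbf{s}_1)\le Q_n^\alpha(\mathbf{s}_1;w^*)$ gives the claim. If $w^*=1$ (possible only when $a=1$), then from both $\mathbf{s}_1$ and $\mathbf{s}_2$ the chain moves to the identical state $(d+1,0,b-1,a')$, so the two $Q$-values are literally equal. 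If $w^*=0$ (including the forced case $a=0$), I couple the next-slot arrival indicator $a'$ across the two transitions. When $l\ge 2$, the successor states are the busy states $(d+1,\delta',l-1,a')$ and $(d+1,\delta',l,a')$ with a common $\delta'$; after checking that both are valid (which follows from the validity of $\mathbf{s}_1$ and $\mathbf{s}_2$), the induction hypothesis applied with argument $l-1<l$ gives $V_n^\alpha(d+1,\delta',l-1,a')\le V_n^\alpha(d+1,\delta',l,a')$ for each $a'$, and Lemma \ref{expectation} transfers the inequality to the expectation.

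The delicate case is $l=1$: here $\mathbf{s}_1=(d,\delta,1,a)$ completes its transmission and moves to an idle (synchronized) state of the form $(\delta',\delta',0,a')$, whereas $\mathbf{s}_2=(d,\delta,2,a)$ moves to the busy state $(d+1,\delta',1,a')$. These successors differ in $d$, $\delta$ and $l$ at once, so the induction hypothesis is useless here; this is precisely where Lemma \ref{busy_large} enters, since it yields $V_n^\alpha(\delta',\delta',0,a')\le V_n^\alpha(d+1,\delta',1,a')$ for each value of $a'$, once one verifies that $(d+1,\delta',1,a')$ is a valid state with remaining time $1$ (again inherited from validity of $\mathbf{s}_2$). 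Averaging over $a'$ with Lemma \ref{expectation} closes this case and, with it, the induction. I expect the main obstacle to be the bookkeeping around the boundary situations --- the exact value of $\delta'$ and of the synchronized state reached when $l=1$ (in particular when $\delta=0$), together with the re-verification of state validity after each transition --- rather than anything conceptual: the argument is carried entirely by the trichotomy ``switching collapses both states to a common successor / skipping with $l\ge 2$ uses the induction hypothesis / skipping with $l=1$ uses Lemma \ref{busy_large}.''
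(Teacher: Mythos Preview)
Your proposal is correct and follows exactly the approach the paper signals: the paper omits the proof but states that ``With Lemma~\ref{busy_large}, we can prove the monotonicity of the value function in $l$,'' and your induction-on-$n$ argument, with the trichotomy (switching collapses to a common successor / skipping with $l\ge 2$ uses the induction hypothesis / skipping with $l=1$ uses Lemma~\ref{busy_large}), is precisely the intended route.
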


Lemma \ref{mono_d}, Lemma \ref{mono_s} and Lemma \ref{mono_l} indicate that the value function has higher value with larger $d$, $\delta$ or $l$, which are consistent with our intuition that states with larger AoS at the destination or the transmitter, or longer remaining transmission time are less preferable.

\subsection{Multi-threshold Structure of the Optimal Policy}
With the monotonicity of the value function in $d$, $\delta$ and $l$ established, we are ready to obtain the multi-threshold structure of the optimal policy. Firstly, we show the existence of the thresholds on $l$ and $d$ as follows,
\begin{lemma}
\label{switch_l}
If $Q_n^\alpha(d,\delta,l,1;1)\leq Q_n^\alpha(d,\delta,l,1;0)$, then for any state $\mathbf{s'}=(d,\delta,l',1)$ with $l'>l$, if valid, we must have $Q_n^\alpha(d,\delta,l',1;1)\leq Q_n^\alpha(d,\delta,l',1;0)$.
\end{lemma}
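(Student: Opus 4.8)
The plan is to exploit the fact that the ``switch'' branch of the state--action value function does not depend on $l$ at all, so that the claimed threshold reduces to a monotonicity statement about the ``skip'' branch. Concretely, from the transition rule for $a=1,w=1$ we have, for every valid busy state,
\begin{equation*}
Q_n^\alpha(d,\delta,l,1;1)=d+\alpha\,\mathbb{E}_a\!\left[V_n^\alpha(d+1,0,b-1,a)\right],
\end{equation*}
which is independent of both $l$ and $\delta$. Hence it suffices to prove that $Q_n^\alpha(d,\delta,l,1;0)$ is non-decreasing in $l$ along the valid busy states sharing the same $(d,\delta)$: once this is done, for any valid $\mathbf{s'}=(d,\delta,l',1)$ with $l'>l$ we obtain
\begin{equation*}
Q_n^\alpha(d,\delta,l',1;1)=Q_n^\alpha(d,\delta,l,1;1)\le Q_n^\alpha(d,\delta,l,1;0)\le Q_n^\alpha(d,\delta,l',1;0),
\end{equation*}
where the middle inequality is exactly the hypothesis. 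Since for fixed $(d,\delta)$ the set of valid remaining times is an interval (the validity constraint $\delta<b-l\le d$ reads $b-d\le l<b-\delta$), it is enough to treat the step $l'=l+1$ and then chain.

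Next I would expand the skip branch as $Q_n^\alpha(d,\delta,l,1;0)=d+\alpha\,\mathbb{E}_a[V_n^\alpha(g(l))]$, where $g(l)$ is the skip-successor produced by the $a=1,w=0$ transition and the additive cost $d$ is the same for every $l$. For $l\ge 2$ the transition gives $g(l)=(d+1,\delta+1,l-1,a)$, so comparing $g(l)$ with $g(l+1)=(d+1,\delta+1,l,a)$ — both having positive remaining time and both valid by the argument above — Lemma~\ref{mono_l} yields $V_n^\alpha(g(l))\le V_n^\alpha(g(l+1))$ for each $a$, and Lemma~\ref{expectation} then carries the inequality through the expectation over $a$. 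For the boundary step $l=1\to l=2$, the successor $g(1)$ is an \emph{idle} state — $(\delta+1,\delta+1,0,a)$ when $\delta>0$, and the analogous reset state when $\delta=0$ — whereas $g(2)=(d+1,\delta+1,1,a)$ is a valid $l=1$ busy state; here Lemma~\ref{busy_large} applied to $g(2)$ (combined, in the $\delta=0$ sub-case, with Lemma~\ref{mono_dd}) gives $V_n^\alpha(g(1))\le V_n^\alpha(g(2))$, and Lemma~\ref{expectation} again finishes. Chaining these comparisons establishes monotonicity of $Q_n^\alpha(d,\delta,\cdot,1;0)$, and the lemma follows. Note no induction on $n$ is needed, since Lemmas~\ref{expectation}, \ref{busy_large}, \ref{mono_dd} and \ref{mono_l} already hold at the iteration $n$ in question.

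I expect the boundary case $l=1$ to be the only real obstacle: there the skip-successor stops being ``a busy state with one fewer slot'' and collapses to a synchronized/idle state, so plain $l$-monotonicity (Lemma~\ref{mono_l}) no longer applies and one must invoke the finer inequality in Lemma~\ref{busy_large} (and dispatch the $\delta=0$ sub-case, where the successor is a reset state, via Lemma~\ref{mono_dd}). A secondary, purely bookkeeping point is to check that every state entered along the way — in particular $g(2)=(d+1,\delta+1,1,a)$ and $g(l)=(d+1,\delta+1,l-1,a)$ — is valid, so that the invoked lemmas genuinely apply; this is immediate from the validity constraints $\delta<b-l\le d$ of the states $(d,\delta,l,1)$ and $(d,\delta,l+1,1)$.
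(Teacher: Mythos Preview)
Your argument is correct. The paper omits the proof of this lemma, but your approach --- observing that $Q_n^\alpha(d,\delta,l,1;1)=d+\alpha\,\mathbb{E}_a[V_n^\alpha(d+1,0,b-1,a)]$ is independent of $l$, and then showing that $Q_n^\alpha(d,\delta,l,1;0)$ is non-decreasing in $l$ via Lemma~\ref{mono_l} for the step $l\to l+1$ with $l\ge 2$ and via Lemma~\ref{busy_large} for the boundary step $l=1\to 2$ --- is exactly the route the paper's sequence of monotonicity lemmas is set up to support, and is clearly the intended proof.
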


\begin{lemma}
\label{skip_d}
If $Q_n^\alpha(d,\delta,l,1;0)\leq Q_n^\alpha(d,\delta,l,1;1)$, then for any state $\mathbf{s'}=(d',\delta,l,1)$ with $d'>d$, we must have $Q_n^\alpha(d',\delta,l,1;0)\leq Q_n^\alpha(d',\delta,l,1;1)$.
\end{lemma}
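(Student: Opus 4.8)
My plan is to reduce the claim to a ``diminishing returns in $d$'' property of the value function and then establish that property by a coupling argument. Since the instantaneous cost $C(\mathbf{s};w)=d$ is the same for both actions, for a busy state $(d,\delta,l,1)$ the hypothesis $Q_n^\alpha(d,\delta,l,1;0)\le Q_n^\alpha(d,\delta,l,1;1)$ is equivalent to
\begin{equation}
\mathbb{E}_a[V_n^\alpha(\mathbf{s}_0')]\le\mathbb{E}_a[V_n^\alpha(\mathbf{s}_1')],
\end{equation}
where $\mathbf{s}_1'=(d+1,0,b-1,a)$ is the post-switch state and $\mathbf{s}_0'$ is the post-skip state; from the transition rules $\mathbf{s}_0'=(d+1,\delta+1,l-1,a)$ when $l>1$, while when $l=1$ the post-skip state is an idle state whose first three components do not depend on $d$ (for $l=0$ validity forces $d=\delta$, so varying $d$ with $\delta,l$ fixed is impossible and there is nothing to prove). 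The case $l=1$ is then immediate: raising $d$ to any $d'>d$ leaves $\mathbb{E}_a[V_n^\alpha(\mathbf{s}_0')]$ unchanged, while by Lemma~\ref{mono_d} it can only increase $\mathbb{E}_a[V_n^\alpha(d'+1,0,b-1,a)]$, so the inequality is preserved.

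For $l>1$ it suffices, iterating $d\mapsto d+1$, to show that $\mathbb{E}_a[V_n^\alpha(d+1,0,b-1,a)]-\mathbb{E}_a[V_n^\alpha(d+1,\delta+1,l-1,a)]$ is non-decreasing in $d$; equivalently, that the one-slot increment $V_n^\alpha(x+1,\cdot)-V_n^\alpha(x,\cdot)$ of the value function in its first coordinate is at least as large for the post-switch family $(\,\cdot\,,0,b-1,a)$ as for the post-skip family $(\,\cdot\,,\delta+1,l-1,a)$, and by Lemma~\ref{expectation} it is enough to establish this for each fixed $a\in\{0,1\}$. I would prove it from the following interpretation of the $d$-coordinate: $d$ increases deterministically by one every slot and, upon a successful delivery, is reset to a value that depends only on $(\delta,l)$ and not on $d$. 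Hence, if in a chain started from destination age $x'$ one replays the actions that an optimal policy uses in the chain started from $x<x'$ (an admissible choice, since it takes $w=0$ whenever $a=0$), the two chains have identical $(\delta,l)$-trajectories and identical delivery epochs, and the excess cost is exactly $(x'-x)$ times the $\alpha$-discounted number of slots until the first delivery. For the post-switch state that number is at least $b-1$; for the post-skip state the validity constraint $\delta<b-l\le d$ forces $l\le b-1$, so the ongoing transmission is only $l-1\le b-2$ slots from completion and can be carried to completion without ever switching, giving a discounted time-to-delivery no larger than that of $l-1<b-1$ slots. Since a longer time-to-reset amplifies the penalty of a larger $d$, the post-switch increment dominates. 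Making this precise is carried out by induction on $n$, with Lemmas~\ref{busy_small}, \ref{busy_large} and \ref{mono_l} used to compare the continuation values after the first delivery.

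The main obstacle is exactly this coupling step. The optimal action genuinely depends on $d$ --- that is the very threshold being proved --- so one cannot couple the two chains under a single Markov policy; instead one bounds the larger-$d$ chain by running the smaller-$d$ chain's optimal actions as a (history-dependent) admissible policy, and one must verify admissibility and that the $(\delta,l)$-evolution and the delivery times are genuinely $d$-independent along a fixed action path. A further subtlety is that between the two state families being compared the transmitter age and the remaining transmission time move in opposite directions ($0$ versus $\delta+1$, and $b-1$ versus $l-1$), so there is no monotone coordinate-wise coupling available; the comparison must be driven entirely by the time-to-delivery bound $l-1<b-1$, with the monotonicity Lemmas~\ref{mono_d}--\ref{mono_l} supplying the control on the post-delivery values.
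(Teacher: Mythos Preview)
The paper omits the proof of this lemma entirely (``Due to space limitation, we omit the proofs of the lemmas in this paper''), so there is no proof to compare against; I can only assess your argument on its own.

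Your treatment of $l=1$ is correct and clean: the post-skip successor depends only on $\delta$ and not on $d$, so raising $d$ to $d'$ leaves $Q_n^\alpha(\cdot;0)$ unchanged up to the additive $d'-d$, while Lemma~\ref{mono_d} makes $Q_n^\alpha(\cdot;1)$ increase by at least as much. Your reduction for $l>1$ --- showing that $x\mapsto \mathbb{E}_a[V_n^\alpha(x,0,b-1,a)]-\mathbb{E}_a[V_n^\alpha(x,\delta+1,l-1,a)]$ is non-decreasing --- is also the right target.

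The gap is in the coupling step. The upper bound you use on the post-skip increment,
\[
V_n^\alpha(x+1,\delta+1,l-1,a)-V_n^\alpha(x,\delta+1,l-1,a)\ \le\ \mathbb{E}\Big[\textstyle\sum_{t=0}^{T-1}\alpha^t\Big],
\]
comes from replaying on the $(x+1)$-chain the policy that is optimal for the $x$-chain; the delivery time $T$ in the bound is therefore the first delivery time under $\pi^*(x,\delta+1,l-1,a)$, not under the never-switch policy. Your claim $T\le l-1$ holds only under never-switch; if $\pi^*$ switches at any point before the ongoing packet completes, the clock resets to $b-1$ and $T$ can exceed $b-1$, destroying the comparison $T<b-1$. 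Running never-switch on \emph{both} chains gives $J_{\pi_{\mathrm{ns}}}(x+1)-J_{\pi_{\mathrm{ns}}}(x)=\sum_{t=0}^{l-2}\alpha^t$, but since $V_n^\alpha\le J_{\pi_{\mathrm{ns}}}$ on each side separately, this does not upper-bound the difference of the $V_n^\alpha$'s. Your final paragraph identifies exactly this difficulty but does not resolve it.

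A workable route is to strengthen the inductive invariant: prove by induction on $n$ that for every valid busy pair $(\tilde\delta,\tilde l)$ with $\tilde l\ge 1$ the map $x\mapsto V_n^\alpha(x,0,b-1,a)-V_n^\alpha(x,\tilde\delta,\tilde l,a)$ is non-decreasing. In the inductive step you branch on the optimal action in each of the two states: when both switch the difference is $0$; when both skip it reduces to the same quantity at $x+1$ with shifted $(\tilde\delta,\tilde l)$; in the mixed cases you combine the branch that switched (which factors through the post-switch family) with Lemmas~\ref{mono_s}--\ref{mono_l} on the other branch. This replaces the sample-path time-to-delivery comparison, which is where your sketch currently breaks.
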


Lemma \ref{switch_l} suggests that there exists a threshold on $l$ such that the transmitter will switch to a new update only when the remaining transmission time for the unfinished update is above the threshold.
Similarly, Lemma \ref{skip_d} suggests the transmitter will skip a new update only if the current AoS at the destination is above the threshold.
\allowdisplaybreaks
Based on Lemmas \ref{switch_l} and \ref{skip_d}, we completely characterize the structural properties of the optimal policy in the following theorem.
\begin{theorem}
\label{threshold_s}
Under the optimal policy, for any fixed AoS at destination $d$ and remaining transmission time $l$, there exists a threshold $\tau_{d,l}$, such that when $\delta\geq\tau_{d,l}$, the optimal action is to transmit the new update, i.e., $w^*(d,\delta,l,1)=1$ and when $\delta<\tau_{d,l}$, the optimal action is to continue the transmitter's previous action, i.e., $w^*(d,\delta,l,1)=0$.
Especially, $\tau_{d,l}=b$ if the optimal policy for all states with $d$ and $l$ is to skip.
Besides, for any fixed $d$, the set of thresholds is decreasing in $l$, i.e., $\tau_{d,1}\geq\tau_{d,2}\geq\cdots\geq\tau_{d,b-1-\delta}$.
Similarly, for any fixed $l$, the set of threshold is increasing in $d$, i.e., $\tau_{b-l,l}\leq\tau_{b-l+1,l}\leq\cdots\leq\tau_{d,l}\leq\cdots$.
\end{theorem}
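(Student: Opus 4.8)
The plan is to prove all three parts of the claim first at every finite iteration $n$ of the value iteration and then pass to the limits $n\to\infty$ and $\alpha\to1$. The whole argument rests on one remark about the kernel: since $C(\mathbf{s};w)=d$ does not depend on $w$, for a valid busy state $(d,\delta,l,1)$ we have $Q_n^\alpha(d,\delta,l,1;1)-Q_n^\alpha(d,\delta,l,1;0)=\alpha\big(\mathbb{E}_a[V_n^\alpha(\mathbf{s}^{\mathrm{sw}},a)]-\mathbb{E}_a[V_n^\alpha(\mathbf{s}^{\mathrm{sk}},a)]\big)$, where by the transition rules the switch-successor $\mathbf{s}^{\mathrm{sw}}=(d+1,0,b-1)$ does not depend on $\delta$, while the skip-successor is the busy state $\mathbf{s}^{\mathrm{sk}}=(d+1,\delta+1,l-1)$ when $l\ge2$ and an idle state whose AoS component is non-decreasing in $\delta$ when $l=1$.

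First I would establish the threshold $\tau_{d,l}$. Fix $d$ and $l$ and view $g_n(\delta):=Q_n^\alpha(d,\delta,l,1;1)-Q_n^\alpha(d,\delta,l,1;0)$ as a function of $\delta$ on its valid range $\{0,\dots,b-l-1\}$. One first checks that $\mathbf{s}^{\mathrm{sk}}$ is itself a valid state for every such $\delta$ (the inequalities $\delta+1\le b-(l-1)-1$ and $b-(l-1)\le d+1$ follow from the validity of $(d,\delta,l,1)$; the only care is needed at the $l=1$, $\delta=0$ corner of the kernel). Increasing $\delta$ raises a monotone coordinate of $\mathbf{s}^{\mathrm{sk}}$ while leaving $\mathbf{s}^{\mathrm{sw}}$ fixed, so by Lemma~\ref{mono_s} on the busy branch, Lemma~\ref{mono_dd} on the idle branch, and Lemma~\ref{expectation} to pass to expectations, $\mathbb{E}_a[V_n^\alpha(\mathbf{s}^{\mathrm{sk}},a)]$ is non-decreasing in $\delta$; hence $g_n$ is non-increasing in $\delta$, so $\{\delta:g_n(\delta)\le0\}$ is an up-set. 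Define $\tau_{d,l}^{(n)}$ to be its minimum, or $\tau_{d,l}^{(n)}=b$ when the set is empty (skip optimal for all valid $\delta$). Then the greedy action with respect to $Q_n^\alpha$ on $(d,\delta,l,1)$ is to switch precisely when $\delta\ge\tau_{d,l}^{(n)}$.

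The two monotonicity statements for the thresholds then follow by reading Lemmas~\ref{switch_l} and~\ref{skip_d} at the level of decisions. If $\delta\ge\tau_{d,l}^{(n)}$ then switch is weakly optimal at $(d,\delta,l,1)$, i.e.\ $Q_n^\alpha(d,\delta,l,1;1)\le Q_n^\alpha(d,\delta,l,1;0)$, and Lemma~\ref{switch_l} keeps it weakly optimal at $(d,\delta,l',1)$ for all valid $l'>l$; since this holds for every $\delta\ge\tau_{d,l}^{(n)}$, we get $\tau_{d,l'}^{(n)}\le\tau_{d,l}^{(n)}$. Symmetrically, if $\delta<\tau_{d,l}^{(n)}$ then skip is weakly optimal, and Lemma~\ref{skip_d} keeps it weakly optimal at $(d',\delta,l,1)$ for all $d'>d$, giving $\tau_{d',l}^{(n)}\ge\tau_{d,l}^{(n)}$. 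Finally, $V_n^\alpha\to V^\alpha$ pointwise as $n\to\infty$, so $Q_n^\alpha\to Q^\alpha$ and all these relations are inherited by the discounted problem; its thresholds $\tau_{d,l}=\lim_n\tau_{d,l}^{(n)}$ then have the stated structure, and letting $\alpha\to1$ transfers it to the optimal policy for~(\ref{opt_pro}).

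I expect the existence step to be the only place needing real work: it is the one structural fact not already packaged as a lemma, and it requires correctly extracting the two successor states from the transition kernel and verifying they remain valid states over the whole range of $\delta$ (the boundary $b-l=d$ and the $l=1$, $\delta=0$ corner). After that, the monotonicity of the threshold set in $l$ and in $d$ is a direct repackaging of Lemmas~\ref{switch_l} and~\ref{skip_d}, and the discounted-to-average limiting argument is routine.
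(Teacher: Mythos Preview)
Your proposal is correct and follows essentially the same route as the paper: the existence of $\tau_{d,l}$ comes from the fact that the switch-successor $(d+1,0,b-1)$ is independent of $\delta$ while the skip-successor's value is non-decreasing in $\delta$ (via Lemma~\ref{mono_s} on the busy branch and Lemma~\ref{mono_dd} on the $l=1$ branch), and the two monotonicities of $\tau_{d,l}$ are read off from Lemmas~\ref{switch_l} and~\ref{skip_d} exactly as you do.

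The one stylistic difference is that the paper argues directly at the fixed point $V^\alpha$, $Q^\alpha$ (the lemmas, though stated at every $n$, are used at the limit), whereas you work at each iterate $n$ and then pass to the limit. Both are fine, but your last sentence---``its thresholds $\tau_{d,l}=\lim_n\tau_{d,l}^{(n)}$''---is slightly loose: integer thresholds need not converge, and even if they do the limit can be off by one at a tie. The clean way is the one you in fact already have: $Q_n^\alpha\to Q^\alpha$ pointwise, so the inequalities $Q^\alpha(\cdot;1)\le Q^\alpha(\cdot;0)$ inherit the up-set structure in $\delta$ directly, and you define $\tau_{d,l}$ from $Q^\alpha$ rather than as a limit of $\tau_{d,l}^{(n)}$. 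With that adjustment your argument matches the paper's.
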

\begin{proof}
First we prove the existence of the threshold $\tau_{d,l}$. For any $s\geq0$, if the optimal policy $w^*(d,\delta,l,1)=1$, i.e., the optimal policy is to switch, we must have 
\begin{align}
    Q^\alpha(d,\delta,l,1;1)=&d+\alpha\mathbb{E}_a[V^\alpha(d+1,0,b-1,a)]\nonumber\\
    \leq&d+\alpha\mathbb{E}_a[V^\alpha(d+1,\delta+1,l-1,a)]\label{thm-1}\\
    =&Q^\alpha(d,\delta,l,1;0),\text{ for } l>1.\nonumber\\
    Q^\alpha(d,\delta,1,1;1)=&d+\alpha\mathbb{E}_a[V^\alpha(d+1,0,b-1,a)]\nonumber\\
    \leq&d+\alpha\mathbb{E}_a[V^\alpha(\delta+1,\delta+1,0,a)]\label{thm-2}\\
    =&Q^\alpha(d,s,1,1;0).\nonumber
\end{align}
Then, for any $\delta'>\delta$, 
\begin{align}
    Q^\alpha(d,\delta',l,1;1)=&d+\alpha\mathbb{E}_a[V^\alpha(d+1,0,b-1,a)]\nonumber\\
    \leq&d+\alpha\mathbb{E}_a[V^\alpha(d+1,\delta+1,l-1,a)]\label{thm-3}\\
    \leq&d+\alpha\mathbb{E}_a[V^\alpha(d+1,\delta'+1,l-1,a)]\hspace{-0.1in}\label{thm-4}\\
    =&Q^\alpha(d,\delta',l,1;0),\text{ for } l>1,\nonumber\\
    Q^\alpha(d,\delta',1,1;1)=&d+\alpha\mathbb{E}_a[V^\alpha(d+1,0,b-1,a)]\nonumber\\
    \leq&d+\alpha\mathbb{E}_a[V^\alpha(\delta+1,\delta+1,0,a)]\label{thm-5}\\
    \leq&d+\alpha\mathbb{E}_a[V^\alpha(\delta'+1,\delta'+1,0,a)]\label{thm-6}\\
    =&Q^\alpha(d,\delta',1,1;0),\nonumber
\end{align}
where (\ref{thm-3}) and (\ref{thm-5}) are due to (\ref{thm-1}) and (\ref{thm-2}) respectively, and (\ref{thm-4})(\ref{thm-6}) are based on Lemma \ref{mono_s}.
Thus, for $\delta'>\delta$, the optimal policy for state $(d,\delta',l,1)$ is to switch. 

Similarly, for any $\delta>0$, if the optimal policy $w^*(d,\delta,l,1)=0$, i.e., the optimal policy is to skip, we have
\begin{align}
    Q^\alpha(d,\delta,l,1;0)=&d+\alpha\mathbb{E}_a[V^\alpha(d+1,\delta+1,l-1,a)]\nonumber\\
    \leq&d+\alpha\mathbb{E}_a[V^\alpha(d+1,0,b-1,a)]\label{thm-7}\\
    =&Q^\alpha(d,\delta,l,1;1),\text{ for } l>1.\nonumber\\
    Q^\alpha(d,\delta,1,1;0)=&d+\alpha\mathbb{E}_a[V^\alpha(\delta+1,\delta+1,0,a)]\nonumber\\
    \leq&d+\alpha\mathbb{E}_a[V^\alpha(d+1,0,b-1,a)]\label{thm-8}\\
    =&Q^\alpha(d,\delta,1,1;1).\nonumber
\end{align}
Then, for any $0<\delta'<\delta$, we have
\begin{align}
    Q^\alpha(d,\delta',l,1;0)=&d+\alpha\mathbb{E}_a[V^\alpha(d+1,\delta'+1,l-1,a)]\nonumber\\
    \leq&d+\alpha\mathbb{E}_a[V^\alpha(d+1,\delta+1,l-1,a)]\label{thm-9}\\
    \leq&d+\alpha\mathbb{E}_a[V^\alpha(d+1,0,b-1,a)]\label{thm-10}\\
    =&Q^\alpha(d,\delta',l,1;1),\text{ for } l>1,\nonumber\\
    Q^\alpha(d,\delta',l,1;0)=&d+\alpha\mathbb{E}_a[V^\alpha(\delta'+1,\delta'+1,0,a)]\nonumber\\
    \leq&d+\alpha\mathbb{E}_a[V^\alpha(\delta+1,\delta+1,0,a)]\label{thm-11}\\
    \leq&d+\alpha\mathbb{E}_a[V^\alpha(d+1,0,b-1,a)]\label{thm-12}\\
    =&Q^\alpha(d,\delta',1,1;1),\nonumber
\end{align}
where (\ref{thm-9}) is due to Lemma \ref{mono_s}, (\ref{thm-10}) is based on (\ref{thm-7}), (\ref{thm-11}) follows from Lemma \ref{mono_dd}, and (\ref{thm-12}) is due to (\ref{thm-8}).

Following similar argument, for $\delta'=0$, we have 
\begin{align*}
    Q^\alpha(d,0,l,1;0)=&d+\alpha\mathbb{E}_a[V^\alpha(d+1,1,l-1,a)]\\
    \leq&d+\alpha\mathbb{E}_a[V^\alpha(d+1,\delta+1,l-1,a)]\\
    \leq&d+\alpha\mathbb{E}_a[V^\alpha(d+1,0,b-1,a)]\\
    =&Q^\alpha(d,0,l,1;1),\text{ for } l>1.\\
    Q^\alpha(d,0,l,1;0)=&d+\alpha\mathbb{E}_a[V^\alpha(1,1,0,a)]\\
    \leq&d+\alpha\mathbb{E}_a[V^\alpha(\delta+1,\delta+1,0,a)]\\
    \leq&d+\alpha\mathbb{E}_a[V^\alpha(d+1,0,b-1,a)]\\
    =&Q^\alpha(d,0,1,1;1).
\end{align*}

Combining both cases, for any $\delta'<\delta$, the optimal policy for state $(d,\delta',l,1)$ is to skip. 

Thus, there exists a threshold $\tau_{d,l}$ for states with fixed $d$ and $l$, such that when $\delta\geq\tau_{d,l}$, the optimal action $w^*(d,\delta,l,1)=1$ and when $\delta<\tau_{d,l}$, $w^*(d,\delta,l,1)=0$.

Then we prove the monotonicity of $\tau_{d,l}$ in $l$.
Consider the case when $l>1$ first. The definition of $\tau_{d,l}$ indicates that 
\begin{equation}
\begin{aligned}
     Q^\alpha(d,\tau_{d,l},l,1;1)\leq Q^\alpha(d,\tau_{d,l},l,1;0).
\end{aligned}
\end{equation}
Then, for $l'>l$, if the state $\mathbf{s}=(d,\delta,l,1)$ is valid, according to Lemma \ref{switch_l}, 
\begin{equation}
\begin{aligned}
    Q^\alpha(d,\tau_{d,l},l',1;1)\leq Q^\alpha(d,\tau_{d,l},l',1;0),
\end{aligned}
\end{equation}
which suggests that $\tau_{d,l'}\leq\tau_{d,l}$.
Thus, $\tau_{d,l}$ is decreasing in $l$.

Finally, we prove the monotonicity of $\tau_{d,l}$ in $d$.
For any fixed $d,l$, if $\tau_{d,l}=0$, i.e., the optimal policy is to switch, we have $\tau_{d+1,l}\geq\tau_{d,l}=0$ since the minimum value of the threshold is non-negative.
Otherwise, state $(d,\tau_{d,l}-1,l,1)$ is valid and its optimal policy is to skip.
Therefore, according to Lemma \ref{skip_d}, the optimal policy of state $(d+1,\tau_{d,l}-1,l,1)$ is to skip as well, which suggests $\tau_{d+1,l}\geq\tau_{d,l}$.
Combining two cases, the monotonicity of $\tau_{d,l}$ in $d$ is established.
\end{proof}

\subsection{Structured Value Iteration}
To reduce the computational complexity, we leverage the multi-threshold structure during the value iteration procedure, similar to the structured value iteration algorithm in \cite{8764466}. We omit the detailed algorithm for the brevity of the paper.


\section{Numerical Results}
\label{num_res}

\subsection{The Multi-threshold Policy}
Since the number of states in the original MDP is infinite, numerical iteration over all states is impractical.
Therefore, we propose an approximate MDP as follows: defining the largest AoS at the destination as $d_{max}$ and truncating the state space of the original MDP as $\mathcal{S}_m=\{\mathbf{s}\in\mathcal{S}:d\leq d_{max}\}$.
Then, we set $\mathbf{s}_{t+1}$ as the corresponding capped state if it is outside $\mathcal{S}_m$.
It is shown that the approximate MDP is identical to the original MDP when $d_{max}\to\infty$ \cite{sennott1997computing}.
Thus, when implementing the structured value iteration, we set $b=10$, $d_{max}=400$ and $\alpha=0.9999$.

Fig. \ref{threhold_result} shows the thresholds on $\delta$ with fixed $d$ and $l$.
For any state $\mathbf{s}=(d,\delta,l,1)$, the optimal action is to switch if $\delta$ is above the bar located at the corresponding $d$ and $l$, otherwise the optimal action is to skip.
We note that the thresholds are monotonically increasing in $d$ and decreasing in $l$, as predicted by Theorem \ref{threshold_s}.
Besides, as the update generation rate $p$ increases, the optimal policy is more inclined to skip at the same state.

\subsection{Performance Comparison}
We evaluate the average AoS under the optimal policy, the AoI-optimal policy and two baseline policies, \textit{always skip} and \textit{always switch}, over $10^7$ time slots.
Under the \textit{always skip} policy, the transmitter will always drop the new update if there is a update being transmitted, while under the \textit{always switch} policy, the transmitter will always drop the update being transmission and switch to the new update.
Besides, in order to examine the difference between AoS and AoI, we also study the AoI performance under those policies as well as the AoI-optimal policy in \cite{8764466} under the same setting. The result is shown in Fig. \ref{result}.

We notice that when the generation rate $p\to1$, the average AoS under the \textit{always switch} policy becomes unbounded, while those under the AoS-optimal policy and the \textit{always skip} policy tend to be identical.
The results can be intuitively explained as follows: when the updates are generated at the source frequently, the greedy policy, which prefers to finish the current transmission and decrease the AoS at the destination as soon as possible, will be optimal.
The optimal policy thus behaves the same as the \textit{always skip} policy with high probability.
On the other hand, when the transmitter always switches to new updates, it will not be able to finish the transmission of any update over long periods of time, leading to constantly growing AoS at the destination.

We also notice that in Fig. \ref{result}, when the generation rate $p$ is small, the average AoS under the AoS-optimal policy and the \textit{always switch} policy tend to be the same, which is lower than that under the \textit{always skip} policy.
This is because when $p$ is small, the chance that a new update is generated when the transmitter is busy is small.
Thus, the AoS-optimal policy behaves similarly to the \textit{always switch} policy.

As for the AoI-optimal policy, when the generation rate $p\to1$, the average AoS under it tends to be identical to those under the AoS-optimal policy and \textit{always skip} policy.
But when the generation rate $p$ is small, the average AoS under it is larger than those under the AoS-optimal policy and \textit{always switch} policy.
Thus, the optimal policy for AoI minimization is not efficient for AoS minimization.

For the average AoI performance, all policies, including the AoI-optimal policy, the AoS-optimal policy and the two baseline policies, perform closely when $p$ is small.
When $p\to1$, all policies except \textit{always switch} perform similarly.

Perhaps the most interesting distinction between AoS and AoI lies in the different trending curves as the update generation rate $p$ increases: the minimum AoI monotonically decreases as $p$ increases, while the minimum AoS exhibits the opposite trend.
This is because AoI only depends on the age of the freshest information at the destination without considering the underlying status evolution.
Thus, any information ages linearly in time since its generation.
Correspondingly, lower generation rate increases the duration between two successful updates at the destination, leading to higher AoI.
AoS, on the other hand, depends on the ``change'' of the system status.
Thus, lower generation rate implies that each update can stay fresh for a longer time, and the AoS is actually lower.

\begin{figure}[t]
\centering
\vspace{-0.1in}
\includegraphics[width=\columnwidth]{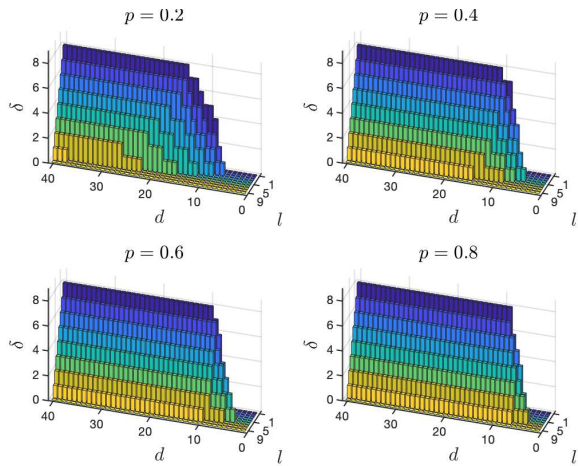}
\vspace{-0.4in}
\caption{Thresholds on $\delta$ with different generation rate $p$.}
\vspace{-0.1in}
\label{threhold_result}
\end{figure}

\begin{figure}[t]
\centering
\vspace{-0.1in}
\includegraphics[width=3.5in]{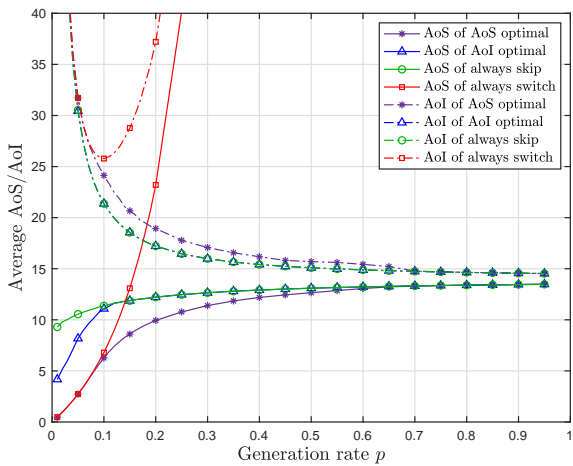}
\vspace{-0.3in}
\caption{Average AoS and AoI with different generation rate $p$.}
\vspace{-0.2in}
\label{result}
\end{figure}

\section{Conclusions}
\label{conclu}
In this paper, we investigate timely synchronization in a status monitoring system with occasional status changes at the source.
We adopt the metric AoS and formulate the problem as an MDP.
Theoretical analysis shows the optimal policy has a multi-threshold structure. Numerical results corroborate the theoretical results.


\bibliography{biblio_icc}

\end{document}